\pgfplotsset{width=7cm,compat=1.8}
\newcommand{\calA}{{\cal A}}
\newcommand{\calC}{{\cal C}}
\newcommand{\calL}{{\cal L}}
\newcommand{\calS}{{\cal S}}
\newcommand{\vphi}{\varphi}
\newcommand{\Lstrict}{\calL_{<}^\calA}
\newcommand{\Lnonstrict}{\calL_{\le}^\calA}
\newcommand{\Gammadestrict}{\Gamma^{(\leq)}}
\newcommand\mybox[2][]{\tikz[overlay]\node[draw, inner sep=1.0pt, anchor=text, rectangle,#1] {#2};\phantom{#2}}
\title{Towards Fast Algorithms for the Preference Consistency Problem Based on Hierarchical Models (Extended Version of IJCAI'16 paper)}
\author{Anne-Marie George, Nic Wilson, Barry O'Sullivan \\
Insight Centre for Data Analytics, 
School of Computer Science and IT \\
University College Cork, Ireland \\
\{annemarie.george, nic.wilson, barry.osullivan\}@insight-centre.org
}
\begin{document}

\maketitle
	
\thispagestyle{empty}
\newtheorem{thm}{Theorem}[section]
\newtheorem{lem}[thm]{Lemma}
\newtheorem{cor}[thm]{Corollary}
\newtheorem{con}[thm]{Conjecture}
\newtheorem{prop}[thm]{Proposition}
\newtheorem{definition}[thm]{Definition}
\newtheorem*{ex}{Example}
\newcommand\HCLP{\mathit{HCLP}}
\newcommand\length{\mathit{lenght}}

\thispagestyle{empty}

\begin{abstract}
In this paper, we construct and compare algorithmic approaches to solve the Preference Consistency Problem for preference statements based on hierarchical models. 
Instances of this problem contain a set of preference statements that are direct comparisons (strict and non-strict) between some alternatives, and a set of evaluation functions by which all alternatives can be rated. 
An instance is consistent based on hierarchical preference models, if there exists an hierarchical model on the evaluation functions that induces an order relation on the alternatives by which all relations given by the preference statements are satisfied. 
Deciding if an instance is consistent is known to be NP-complete for hierarchical models. 

We develop three approaches to solve this decision problem. 
The first involves a Mixed Integer Linear Programming (MILP) formulation, 
the other two are recursive algorithms that are based on properties of the problem by which the search space can be pruned. 
Our experiments on synthetic data show that the recursive algorithms are faster than solving the MILP formulation and that the ratio between the running times increases extremely quickly.
\end{abstract}

\section{Introduction}
In many fields like recommender systems and multi-objective optimization, 
one wants to reason over user preferences. 
In these problems, it is often difficult or excessively time-consuming to elicit all user preferences. 
The Preference Deduction Problem (PDP) aims at eliciting only few preferences and inferring more preferences from the given ones; this might then be used in a conversational recommender system, 
for example, to help choose which items to show to the user next~\cite{BridgRicci07,Trabelsi11}. 
Typically, an assumption is made on the type of order relation that the user (implicitly) uses to express the preference statements. Such order relations can be, e.g., comparing alternatives by the values of the evaluation functions lexicographically~\cite{WilsonECAI14}, by Pareto order, weighted sums~\cite{MCDA2005}, based on hierarchical models~\cite{WilsonGOSIJCAI2015longer} or by conditional preferences structures as CP-nets~\cite{BoutilierBDHP04} and partial lexicographic preference trees~\cite{LiuT15}. 
Here, the choice of the order relation can lead to stronger or weaker inferences and can make solving PDP computationally more or less challenging. 
In a recommender system or in a multi-objective decision making scenario, 
the user should only be presented with a relatively small number of solutions, 
hence, a strong order relation is required. 
Using PDP based on a lexicographic models has been shown to be successful in reducing the number of solutions, 
however, computationally can be more expensive. 
See~\cite{marinescu2013moopt} and~\cite{GeorgeRW15} for comparisons between order relations in a multi-objective optimization framework. 
While PDP based on hierarchical models yields an even lower number of solutions, it is coNP-complete and computationally expensive.
The approach taken by the Preference Deduction Problem contrasts with learning a single model that coincides with the user preferences as in~\cite{Fuern-Hueller-Preference-Learning,Dombi-Learning-Lex-EJOR-2007,FlachM07-Lex-ranker,Huellermeier-PL-12-Learn-Lex-Trees,BoothCLMS10}.

In this paper, we concentrate on another problem that helps solving PDP: the Preference Consistency Problem (PCP).
This is the problem of deciding whether given user preferences are consistent, 
i.e., not contradicting each other. 
In terms of hierarchical preference models, PCP determines whether there exists a hierarchical model of evaluation functions such that the induced order relation on the alternatives satisfies all preference statements given by the user.  
For hierarchical preference models, PCP is NP-complete and PCP and PDP are mutually expressible~\cite{WilsonGOSIJCAI2015longer}, i.e., PCP can be solved directly by solving PDP and vice versa. 
The main issue in this paper is to find fast algorithms to solve PCP (and thus PDP) for hierarchical models by exploiting the problem's structure and to compare their running times on a set of synthetic data.

The next section gives an  introduction to hierarchical preference models, 
their induced order relation on alternatives, and the Preference Consistency Problem. 
Section~\ref{sec: MILP} gives a Mixed Integer Linear Programming formulation for PCP that will function as a baseline for our runtime comparisons. 
Section~\ref{sec: rec algo} discusses the exploitation of properties of PCP instances yielding two recursive algorithms. 
In Section~\ref{sec: experiments}, the conducted experiments and their outcome are described. 
The last section concludes.

This paper is an extended version of the IJCAI'16 paper~\cite{GeorgeW16}.

\section{The Consistency Problem Based on Hierarchical Models}\label{sec: PCP}
For a set of preference statements, consistency is defined to be the property that the statements do not contradict each other. To formally define this term, we first describe the concept of hierarchical models. 

\subsection{Hierarchical Models}\label{sec:hierarchical models}

\noindent Hierarchical models will from here on be called HCLP models, where HCLP stands for ``Hierarchical Constraint Linear Program'' and points out the resemblance of the hierarchical order of the evaluation functions in HCLP models to the order of soft constraints in an HCLP~\cite{WilsonBorning1993}. In the following, we define HCLP structures, HCLP models, and their implied order relation that is a kind of lexicographic order. 

\begin{definition}[HCLP structures] An HCLP structure is a triple $\langle \calA, \oplus, \calC \rangle$. 
Here, $\calA$ is a finite set of alternatives 
and $\calC$ is a set of evaluation functions 
from the alternatives $\calA$ to the non-negative rational numbers $\mathds{Q}^{\geq 0}$.  
$\oplus$ is an associative, commutative, and strictly monotonic operation on $\mathds{Q}^{\geq 0}$, 
i.e., $x \oplus y < z \oplus y$ if and only if $x < z$. 
\end{definition}

In an HCLP structure, the evaluations $\calC$ as well as their combinations by $\oplus$ provide ratings of the alternatives due to unfavorable aspects, e.g., costs, for which we assume that $0$ is the best value. 
The notion of HCLP structures was first introduced in~\cite{WilsonGOSIJCAI2015longer} with $\oplus$ as an associative, commutative and monotonic operation. In this paper, we consider $\oplus$ to be a \emph{strictly} monotonic operation 
as this yields interesting properties allowing us to formulate fast algorithms for checking consistency.
Furthermore, we assume the operation to be computable in logarithmic time.
These assumptions still allow interesting operators like addition and multiplication which seem natural for combining aspects like costs and distances, but strict monotonicity excludes a minimum or maximum operator which could be desired sometimes.

For a subset $C \subseteq \calC$ of evaluations, we define the weak order (transitive and complete binary order) $\preccurlyeq_{C}^{\oplus}$ on the set of alternatives in the following way: for $\alpha, \beta \in \calA$, $\alpha \preccurlyeq_{C}^{\oplus} \beta$ if and only if $\bigoplus_{c \in C} c(\alpha) \leq \bigoplus_{c \in C} c(\beta)$. The corresponding strict order $\prec_{C}^{\oplus}$ is given by $\alpha \prec_{C}^{\oplus} \beta$ if and only if $\alpha \preccurlyeq_{C}^{\oplus} \beta$ and $\alpha \not \succcurlyeq_{C}^{\oplus} \beta$, i.e., $\bigoplus_{c \in C} c(\alpha) < \bigoplus_{c \in C} c(\beta)$. Then, the equivalence relation $\equiv_{C}^{\oplus}$ is given by $\alpha \equiv_{C}^{\oplus} \beta$ if and only if $\alpha \preccurlyeq_{C}^{\oplus} \beta$ and $\alpha \succcurlyeq_{C}^{\oplus} \beta$, i.e., $\bigoplus_{c \in C} c(\alpha) = \bigoplus_{c \in C} c(\beta)$. For $C = \emptyset$, $\alpha \equiv_{C}^{\oplus} \beta$ for all $\alpha, \beta \in \calA$.

\begin{definition}[HCLP models] An HCLP model $H$ for an HCLP structure $\langle \calA, \oplus, \calC \rangle$ is an ordered partition of a subset $\sigma(H) \subseteq \calC$ of evaluations. We write $H$ as sequence $(C_1, \dots, C_k)$, where the (possibly empty) sets $\emptyset \subseteq C_1, \dots, C_k  \subseteq \calC$ are disjoint and $\bigcup_{i =1, \dots, k} C_k~=~\sigma(H)$. We say $C_i$ is the $i$-th level set in $H$. We denote the empty HCLP model with $\sigma(H)~=~\emptyset$ by $H=()$.
\end{definition}

An HCLP model can be viewed as a hierarchy on the evaluation functions. For HCLP model $H = (C_1, \dots, C_k)$ the level set $C_1$ contains the most important evaluation functions; the level set $C_2$ contains the second most important evaluation functions and so on. Evaluations $\calC \setminus \sigma(H)$ that are not included in the HCLP model are irrelevant for rating the alternatives. Accordingly, we compare two alternatives first on a combination by $\oplus$ of the most important evaluation functions; only if these are equal do we compare them on the combination of the next most important evaluations.
Hence, each HCLP model $H = (C_1, \dots, C_k)$ implies a weak order $\preccurlyeq_{H}^{\oplus}$ on the alternatives that is a lexicographic order on combinations of evaluations. More specifically, for two alternatives $\alpha, \beta \in \calA$, $\alpha \preccurlyeq_{H}^{\oplus} \beta$ if and only if 
\begin{itemize}
	\item[(I)] for all $i = 1, \dots, k$, $\alpha \equiv_{C_i}^{\oplus} \beta$; or
	\item[(II)] there exists some $i \in \{1, \dots, k\}$ such that 
	$\alpha \prec_{C_i}^{\oplus} \beta$ and for all $1 \leq j < i$, $\alpha \equiv_{C_j}^{\oplus} \beta$.
\end{itemize}
Similarly to $\prec_{C}^{\oplus}$ and $\equiv_{C}^{\oplus}$, we define the strict weak order $\prec_{H}^{\oplus}$ and the equivalence relation $\equiv_{H}^{\oplus}$. For $\alpha, \beta \in \calA$, $\alpha \prec_{H}^{\oplus} \beta$ if and only if $\alpha \preccurlyeq_{H}^{\oplus} \beta$ and $\alpha \not \succcurlyeq_{H}^{\oplus} \beta$ (i.e., condition (II) holds). Analogue, $\alpha \equiv_{H}^{\oplus} \beta$ if and only if $\alpha \preccurlyeq_{H}^{\oplus} \beta$ and $\alpha \succcurlyeq_{H}^{\oplus} \beta$ (i.e., condition (I) holds).
Note, that by these definitions level sets of HCLP models can be empty, but empty level sets do not affect the relations $\preccurlyeq_{H}^{\oplus}$, $\prec_{H}^{\oplus}$ and $\equiv_{H}^{\oplus}$ or any statements based on these relations.

In this paper, we consider special classes of HCLP models where the sizes of the level sets are bounded by some constant. The class $\calC(t)$ is defined to be the set of HCLP models with level sets that contain at most $t$ evaluations, i.e., if $H=(C_1, \dots, C_k)$ is in $\calC(t)$ then $|C_i| \leq t$ for all $i=1, \dots,k$. Note, that the model classes satisfy the relation $\calC(s) \subseteq \calC(t)$ for $s \leq  t$. Class $\calC(1)$ contains standard lexicographic models.

\begin{ex}
Consider the choice of alternatives Apple Pie (AP), Chocolate Cake (CC) and Ice Cream (IC). The desserts are rated by the evaluation functions: calories ($c$), sugar ($s$), and fat ($f$). The values of $\calC=\{c,s,f\}$ are percentages of the recommended daily intake as shown in Table~\ref{tab: ex}. Here, 0 is the best possible value, meaning 0\% sugar, calories or fat of the recommended daily intake is contained in the dessert. Let $\oplus$ be the standard addition on $\mathds{Q}$.

\begin{table}[ht]
\tiny
\centering
\resizebox{.516\columnwidth}{!}{
\begin{tabular}{l|c|c|c}
 & AP & CC & IC\\
\hline
\hline
$c$ & 10 & 13 & 11 \\ \hline
$s$ & 23 & 23 & 16 \\ \hline
$f$ & 20 & 17 & 24 \\ \hline
$f \oplus s$ & 43 & 40 & 40 
\end{tabular}
}
\caption{Values of $c,s,f, f \oplus s$ evaluated on AP, CC, IC.}
\label{tab: ex}
\end{table}

The HCLP model $H=(\{f,s\},\{c\})$ consists of the pair of most important evaluations $f$ and $s$ followed by the singleton next most important evaluation $c$. Hence, $H$ is in $\calC(2)$ but not in $\calC(1)$. Also, $H$ implies the relations:\\
IC $\prec_{H}^{\oplus}$ CC, since $(f($IC$) \oplus s($IC$),c($IC$))= (40, 11) <_{\textit lex} (40,13) = (f($CC$) \oplus s($CC$), c($CC$))$. Similarly, IC $\prec_{H}^{\oplus}$~AP and  CC $\prec_{H}^{\oplus}$~AP.
\end{ex}

\subsection{Preference Consistency}\label{sec:consistency}

In the following, we first describe the language of preference statements considered in this paper and then define PCP. For this, we define $\calL^{\calA}_{\leq}$ to be the set of non-strict preference statements $\alpha \leq \beta$ (meaning $\alpha$ is preferred to $\beta$) on alternatives $\alpha, \beta \in \calA$. Similarly, $\calL^{\calA}_{<}$ is the set of strict preference statements $\alpha < \beta$ (meaning $\alpha$ is strictly preferred to $\beta$) for $\alpha, \beta \in \calA$. Let $\calL^{\calA} = \calL^{\calA}_{\leq} \cup \calL^{\calA}_{<}$. We write a preference statement $\vphi \in \calL^{\calA}_{\leq}$ as $\alpha_\vphi \leq \beta_\vphi$, and $\vphi \in \calL^{\calA}_{<}$ as $\alpha_\vphi < \beta_\vphi$ for $\alpha_\vphi, \beta_\vphi \in \calA$. We denote the non-strict version of preference statements $\Gamma \subseteq \calL^{\calA}$ by $\Gammadestrict$, i.e., $\Gammadestrict = \{\alpha_\vphi \leq \beta_\vphi \;|\; \vphi \in \Gamma\}$.

\begin{definition}[Satisfaction of Preference Statements] Let $H$ be an HCLP model for HCLP structure $\langle \calA, \oplus, \calC \rangle$. 
We say $H$ satisfies a preference statement $\vphi \in \calL^{\calA}$, 
denoted $H \vDash^{\oplus} \vphi$, if 
 $\alpha_\vphi \preccurlyeq_{H}^{\oplus} \beta_\vphi$ for $\vphi \in \calL^{\calA}_{\leq}$; or 
 $\alpha_\vphi \prec_{H}^{\oplus} \beta_\vphi$ for $\vphi \in \calL^{\calA}_{<}$.
Furthermore, $H$ satisfies a set of preference statements $\Gamma \subseteq \calL^{\calA}$, denoted $H \vDash^{\oplus} \Gamma$, if $H$ satisfies all preference statements in $\Gamma$, i.e., $H \vDash^{\oplus} \vphi$ for all $\vphi \in \Gamma$.
\end{definition}

\begin{definition}[Consistency] Let $\langle \calA, \oplus, \calC \rangle$ be an HCLP structure and $t \in \{1, \dots, |\calC|\}$ a constant. We say $\Gamma \subseteq \calL^{\calA}$ is $\calC(t)$-consistent, if there exists an HCLP model $H \in \calC(t)$ such that $H \vDash^{\oplus} \Gamma$.
\end{definition}

By this definition, the empty model $H=()$ always satisfies non-strict statements, but never satisfies strict statements. 
Thus, $\Gamma \subseteq \calL^{\calA}_{\leq}$ is always consistent. 
It is easy to show that $\Gamma$ is $\calC(t)$-consistent if $\Gamma$ is $\calC(s)$-consistent for some $s < t$.
Here, the class of HCLP models $\calC(1)$ consists of  lexicographic models that imply normal lexicographic order relations. 
Thus, if preference statements $\Gamma$ are consistent with respect to lexicographic models, then $\Gamma$ is consistent with respect to HCLP models in $\calC(t)$. 
	Also, every induced order relation from an HCLP model can be represented by a weighted sum of evaluation functions that uses normalized weights with extreme values, see~\cite{GeorgeRW15}. 
	The corresponding preference models for order relations based on weighted sums are called weighted average models. 	
	George et al. \cite{GeorgeRW15} show that,  if preference statements $\Gamma$ are consistent with respect to HCLP models in $\calC(t)$, 
	then $\Gamma$ is consistent with respect to weighted average models. 

\begin{ex}[continued]
Consider preference statements:
\begin{enumerate}
	\item[(1)] I strictly prefer ice cream to apple pie (IC~$<$~AP).
	\item[(2)] I prefer chocolate cake to apple pie (CC $\leq$ AP).
\end{enumerate}
Then (1) is a strict preference statement, (2) is a non-strict preference statement. As before, HCLP model $H=(\{c,s\},\{f\})$ satisfies IC $\prec_{H}^{\oplus}$ AP and CC $\prec_{H}^{\oplus}$ AP and so CC~$\preceq_{H}^{\oplus}$ AP. Thus, (1) and (2) together are consistent.
\end{ex}

We can now formulate the Preference Consistency and Deduction (decision) Problems for classes $\calC(t)$.

\smallskip\noindent\textbf{$\calC(t)$ {Preference Consistency Problem} ($\calC(t)$-PCP):} Given an HCLP structure $\langle \calA, \oplus, \calC \rangle$, a constant $t \in \{1, \dots, |\calC|\}$ and a set of preference statements $\Gamma \subseteq \calL^{\calA}$. Is $\Gamma$ $\calC(t)$-consistent?

\smallskip\noindent\textbf{$\calC(t)$ {Preference Deduction Problem} ($\calC(t)$-PDP):} Given an HCLP structure $\langle \calA, \oplus, \calC \rangle$, a constant $t \in\{1, \dots, |\calC|\}$, some preference statements $\Gamma \subseteq \calL^{\calA}$ and $\vphi \in \calL^{\calA}\setminus\Gamma$. Does $H \vDash^{\oplus}~\vphi$ hold for all $H \in \calC(t)$ with $H \vDash^{\oplus} \Gamma$?

Wilson et al.~\cite{WilsonGOSIJCAI2015longer} (in their Proposition 1) show that $\calC(t)$-PCP and $\calC(t)$-PDP are mutually expressive, i.e., PCP can be solved by solving PDP and vice versa:

\begin{prop}\label{pr:basic-cons-deduction}
$H \vDash^{\oplus} \vphi$ for all $H \in \calC(t)$ with $H \vDash^{\oplus} \Gamma$ if and only if
$\Gamma\cup \{\neg\vphi\}$ is $\calC(t)$-inconsistent, where $\neg\vphi = \beta_\vphi <~\alpha_\vphi$ for $\vphi \in \calL^{\calA}_{\leq}$, or $\neg\vphi = \beta_\vphi \leq \alpha_\vphi$ for $\vphi \in \calL^{\calA}_{<}$.
\end{prop}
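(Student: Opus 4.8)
The plan is to reduce both sides of the stated biconditional to one and the same assertion, with the bridge being that every induced relation $\preccurlyeq_{H}^{\oplus}$ is a \emph{complete} weak order on $\calA$. Concretely, for a fixed HCLP model $H$ and alternatives $\alpha,\beta\in\calA$, exactly one of $\alpha\prec_{H}^{\oplus}\beta$, $\beta\prec_{H}^{\oplus}\alpha$, $\alpha\equiv_{H}^{\oplus}\beta$ holds; hence $\alpha\preccurlyeq_{H}^{\oplus}\beta$ iff $\beta\not\prec_{H}^{\oplus}\alpha$, and $\alpha\prec_{H}^{\oplus}\beta$ iff $\beta\not\preccurlyeq_{H}^{\oplus}\alpha$. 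Both facts are immediate from completeness together with the definitions of $\prec_{H}^{\oplus}$ and $\equiv_{H}^{\oplus}$.

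First I would turn this into a statement about satisfaction. Using the definition of $\vDash^{\oplus}$ and the definition of $\neg\vphi$ given in the proposition, the observation above yields, for every $H\in\calC(t)$, the equivalence $H\vDash^{\oplus}\vphi \iff H\not\vDash^{\oplus}\neg\vphi$. This should be checked in the two cases separately: if $\vphi\in\calL^{\calA}_{\le}$, then $\vphi$ reads $\alpha_\vphi\le\beta_\vphi$ and $\neg\vphi$ reads $\beta_\vphi<\alpha_\vphi$, so the claim is $\alpha_\vphi\preccurlyeq_{H}^{\oplus}\beta_\vphi \iff \beta_\vphi\not\prec_{H}^{\oplus}\alpha_\vphi$; if $\vphi\in\calL^{\calA}_{<}$, then $\vphi$ reads $\alpha_\vphi<\beta_\vphi$ and $\neg\vphi$ reads $\beta_\vphi\le\alpha_\vphi$, so the claim is $\alpha_\vphi\prec_{H}^{\oplus}\beta_\vphi \iff \beta_\vphi\not\preccurlyeq_{H}^{\oplus}\alpha_\vphi$. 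Each is an instance of the dichotomy from the previous step.

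Then I would chain equivalences to finish. By the definition of consistency, $\Gamma\cup\{\neg\vphi\}$ is $\calC(t)$-inconsistent iff there is no $H\in\calC(t)$ with $H\vDash^{\oplus}\Gamma$ and $H\vDash^{\oplus}\neg\vphi$; equivalently, for every $H\in\calC(t)$, $H\vDash^{\oplus}\Gamma$ implies $H\not\vDash^{\oplus}\neg\vphi$; by the per-model equivalence just established, this says that for every $H\in\calC(t)$, $H\vDash^{\oplus}\Gamma$ implies $H\vDash^{\oplus}\vphi$, which is precisely the left-hand side of the stated biconditional.

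I do not anticipate a real obstacle: the argument is a definitional unwinding whose only substantive ingredient, completeness of $\preccurlyeq_{H}^{\oplus}$, is built into the definition of the induced order. The one place to be careful is the bookkeeping of the two syntactic forms of $\neg\vphi$, so that the negated statement lands on the correct side of $\preccurlyeq$ versus $\prec$; apart from that, nothing deeper than the trichotomy for a complete weak order is needed, and empty level sets of $H$ can be ignored throughout since, as noted just after the definition of $\preccurlyeq_{H}^{\oplus}$, they affect none of these relations.
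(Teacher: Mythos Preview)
Your argument is correct: the crux is the per-model equivalence $H\vDash^{\oplus}\vphi \iff H\not\vDash^{\oplus}\neg\vphi$, which follows from totality of the weak order $\preccurlyeq_{H}^{\oplus}$, and the rest is an unwinding of the definition of $\calC(t)$-consistency. Note, however, that the paper does not supply its own proof of this proposition; it merely quotes the result from Wilson et al.\ (their Proposition~1), so there is nothing in the present paper to compare your route against beyond observing that your derivation is the standard one.
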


They also established that $\calC(t)$-PDP is coNP-complete for any $t \geq 2$ (even if $\oplus$ is strictly monotonic). Thus $\calC(t)$-PCP is NP-complete. A greedy algorithm can solve the special cases $\calC(1)$-PCP and $\calC(1)$-PDP in time $O(|\calC|\cdot|\Gamma|)$.
	Starting with the empty model, the HCLP model is constructed by repeatedly adding evaluations as singleton level sets 
	such that the preference statements $\Gamma$ are not opposed. 
	The resulting HCLP model satisfies all strict preference statements in $\Gamma$ 
	if and only if $\Gamma$ is $\calC(1)$-consistent. 
	The correctness of this algorithm strongly depends on the fact 
	that all maximal $\Gammadestrict$-satisfying $\calC(1)$  HCLP model contain the same evaluations and (strictly) satisfy the same statements in $\Gamma$. 
	This only holds for the class $\calC(1)$, and not for the general case of $\calC(t)$. 
	In the remainder of this paper, we concentrate on the NP-complete case, i.e., $t \geq 2$ .

\section{MILP Formulation}\label{sec: MILP}
We describe a Mixed Integer Linear Programming (MILP) formulation for $\calC(t)$-PCP with HCLP structure $\langle \calA, \oplus, \calC \rangle$ and preference statements $\Gamma \subseteq \calL^{\calA}$, where evaluations $\calC$ are integral, $\oplus$ is the standard addition on integers and $n = |\calC|$.

\smallskip\noindent\textbf{Assigning Evaluations to Level Sets:}
We introduce a matrix of Boolean variables $Y \in \{0,1\}^{n \times n}$ such that $y_{i,j} = 1$ if and only if evaluation $i$ is included in the $j$-th level set of the solution HCLP model. For $\calC(t)$-PCP, every evaluation is contained in at most one level set and the cardinality of the level sets is bounded by $t$.
\begin{equation}
	\sum_{j = 1}^{n} y_{i,j} \leq 1\;\; \text{ and }
	\sum_{j = 1}^{n} y_{j,i} \leq t \;\; \forall i = 1, \dots, n.
\end{equation}

\smallskip\noindent\textbf{Maintaining Values of $\oplus$-combined Level Sets:}
The matrix of integer variables $X \in \mathds{Q}^{n \times |\Gamma|}$ contains the values of the combined evaluation functions in the level sets for the alternatives of the preference statements. Thus, $x_{i, \vphi} = \bigoplus_{c \in C_i} c(\alpha_\vphi) - \bigoplus_{c \in C_i} c(\beta_\vphi)$ maintains the degree of support/opposition of statement $\vphi$ in the $i$-th level set $C_i$.
\begin{equation}
	\sum_{i = 1}^{n} y_{i,j}( c(\alpha_\vphi) - c(\beta_\vphi)) = x_{j, \vphi} \;\; \forall j = 1, \dots, n, \forall \vphi \in \Gamma.
\end{equation}

We define the bounds $M_\vphi \geq x_{j, \vphi} \geq m_\vphi$ for all $x_{j, \vphi}$ by
\begin{equation}
m_\vphi = \min \limits_{E \subseteq \calC} {\sum \limits_{c \in E}  c(\alpha_\vphi) - c(\beta_\vphi)} 
= \hspace{-.7cm} \sum \limits_{c \in \calC, c(\alpha_\vphi) < c(\beta_\vphi)} \hspace{-.5cm} c(\alpha_\vphi) - c(\beta_\vphi), \nonumber 
\end{equation}
\begin{equation}
M_\vphi = \max \limits_{E \subseteq \calC} {\sum \limits_{c \in E}  c(\alpha_\vphi) - c(\beta_\vphi)} 
= \hspace{-.7cm} \sum \limits_{c \in \calC, c(\alpha_\vphi) > c(\beta_\vphi)} \hspace{-.5cm} c(\alpha_\vphi) - c(\beta_\vphi). \nonumber
\end{equation}

\smallskip\noindent\textbf{Maintaining the Sign of Level Sets (Supporting, Opposing and Indifferent):}
The Boolean variables $s^{<0}_{j,\vphi}, s^{>0}_{j,\vphi}$ and $s^{=0}_{j,\vphi}$ express the sign for $x_{j, \vphi}$. 
 This is, $s^{<0}_{j,\vphi} = 1$ if and only if $x_{j, \vphi} <0$, $s^{>0}_{j,\vphi} = 1$ if and only if $x_{j, \vphi} >0$, and $s^{=0}_{j,\vphi} = 1$ if and only if $x_{j, \vphi} =0$.
Since exactly one of the relations holds, 
\begin{equation}
	s^{<0}_{j,\vphi} + s^{>0}_{j,\vphi} + s^{=0}_{j,\vphi}= 1 \;\; \forall j = 1, \dots, n, \forall \vphi \in \Gamma.
\end{equation}
 
 To enforce the equivalences, we make use of the bounds $M_\vphi$ and $m_\vphi$ and the integrity of the evaluations. In particular, we use that the lowest positive value $x_{j, \vphi}$ can take is $1$ and the highest negative value is $-1$.
 
For the implication $s^{<0}_{j,\vphi} = 1$ $\Rightarrow$ $x_{j, \vphi}<0$, we formulate:
\begin{equation}
	x_{j, \vphi} + s^{<0}_{j,\vphi}(M_\vphi+1) \leq M_\vphi \;\; \forall j = 1, \dots, n, \forall \vphi \in \Gamma.
\end{equation}

For the implication $s^{>0}_{j,\vphi} = 1$ $\Rightarrow$ $x_{j, \vphi}>0$, we formulate:
 \begin{equation}
	x_{j, \vphi} + s^{>0}_{j,\vphi}(m_\vphi-1) \geq m_\vphi \;\; \forall j = 1, \dots, n, \forall \vphi \in \Gamma.
\end{equation}

Finally, we enforce $s^{=0}_{j,\vphi} = 1$ $\Rightarrow$ $x_{j, \vphi}=0$ by
 \begin{equation}
	x_{j, \vphi} - (1-s^{=0}_{j,\vphi})m_\vphi \geq 0 \;\; \forall j = 1, \dots, n, \forall \vphi \in \Gamma \text{ and }
\end{equation}
 \begin{equation}
	x_{j, \vphi} - (1-s^{=0}_{j,\vphi})M_\vphi \leq 0 \;\; \forall j = 1, \dots, n, \forall \vphi \in \Gamma.\;\;\;\;\;\;\;
\end{equation}

The equivalences follow from (3) together with (4)-(7).

\smallskip\noindent\textbf{Satisfaction of Strict and Non-strict Statements:}
 Following the definition of $\preccurlyeq_{H}^{\oplus}$, the HCLP model corresponding to the variable assignments of $Y$ satisfies a non-strict statement $\vphi$ in $\Gamma$ if and only if
\begin{itemize}
	\item[(I$'$)] for all $i = 1, \dots, n$, $s^{=0}_{i,\vphi} = 1$; or
	\item[(II$'$)] there exists some $i \in \{1, \dots, n\}$ such that 
	$s^{<0}_{i,\vphi} = 1$ and for all $1 \leq j < i$, $s^{=0}_{j,\vphi} = 1$.
\end{itemize}
Analogously, a strict statements $\vphi$ in $\Gamma$ is satisfied if and only if (II$'$) holds.
It is easy to check that conditions (I$'$) or (II$'$) holding for all $\vphi \in \Gamma$ is equivalent to 
 \begin{equation}
	 \sum \limits_{j=1}^{i-1} s^{<0}_{j,\vphi} \geq s^{>0}_{i,\vphi}\;\; \forall i = 1, \dots, n, \forall \vphi \in \Gamma.
\end{equation}
Inequality (8) yields the satisfaction of $\Gammadestrict$.
We enforce satisfaction of all strict statements in $\Gamma$, by
 \begin{equation}
	 \sum \limits_{j=1}^{n} s^{<0}_{j,\vphi} \geq 1 \;\; \forall \vphi \in \Gamma \cap \Lstrict.
\end{equation}

The constraints (1)-(9) form a rather simple MILP formulation for $\calC(t)$-PCP. 
Constraints (3)-(9) could be replaced by sums with extreme weights to enforce a lexicographic order on the level sets.	
	Let $L>0$ be sufficiently large, then the following inequalities can be used to replace (3)-(9).
	 \begin{equation}
		 \sum \limits_{j=1}^{n} \frac{x_{j,\vphi}}{L^j} \leq 0 \;\; \forall \vphi \in \Gamma \cap \Lnonstrict.
	\end{equation}
	 \begin{equation}
		 \sum \limits_{j=1}^{n} \frac{x_{j,\vphi}}{L^j} < 0 \;\; \forall \vphi \in \Gamma \cap \Lstrict.
	\end{equation}
 However, these inequalities lead to numerical difficulties for the solver. 
  	This is true even for small instances with integral evaluation functions of small domains and a sophisticated choice for $L$. 
Also, decision variables $y_{i,j}$ could be substituted by $y'_{i,j}$ such that $y'_{i,j}=1$ if and only if $i$ is included in a level set $\geq j$. Variables $s^{<0}_{j,\vphi}, s^{>0}_{j,\vphi},s^{=0}_{j,\vphi} \in \{0,1\}$ might be replaceable by a variable $s_{j,\vphi} \in \{0,1,2\}$.
However, since our MILP is a satisfaction problem, not an optimization problem, it is not clear whether any of these measures improve the formulation.
After trying various Constraint Programming models with set or binary variables, different versions of constraints and different search heuristics, the MILP formulation using inequalities (1)-(9) seemed most promising.

\section{Recursive Algorithms}\label{sec: rec algo}
In the following, we describe two recursive search algorithms for $\calC(t)$-PCP. 
The algorithms are based on properties of PCP that can be used to prune the search space. Both try to construct a $\Gamma$-satisfying HCLP model by sequentially adding new level sets to the model that do not oppose any preference statement that is not strictly satisfied so far. 
Thus, during the algorithm the current model always satisfies $\Gammadestrict$, the non-strict version of $\Gamma$. 
We backtrack when the current model cannot be extended further and the model does not satisfy all strict preference statements.
The approaches aim to reduce the number of  $\Gammadestrict$-satisfying HCLP models constructed by the algorithm. In particular, they try to identify and ignore level sets which cannot lead to a $\Gamma$-satisfying HCLP model although not opposing the preference statements.

\smallskip\noindent\textbf{Utilising Sequences of Singleton Level Sets:}
The first approach is based on the idea of including as many singleton level sets as possible. This seems computationally less challenging since a $\Gammadestrict$- satisfying sequence of singleton level sets that is maximal in the number of level sets can be found in time $O(|\calC|\cdot|\Gamma|)$~\cite{WilsonGOSIJCAI2015longer}. Remember, that $\Gammadestrict$ is defined to be the non-strict version of $\Gamma$, i.e., $\Gammadestrict = \{\alpha_\vphi \leq \beta_\vphi \;|\; \vphi \in \Gamma\}$.
In the following we show that for strictly monotonic operators $\oplus$ the recursive search algorithm never needs to backtrack over the choice of such singleton sequences.
We first establish the following property for strictly monotonic operators $\oplus$ which can be shown by a short technical proof.

\begin{lem}\label{lem: strict mon rule}
	Let $\oplus$ be a strictly monotonic operator and let $X,Y \subseteq \calC$ be sets of evaluation functions with $X \subseteq Y$. Let $\alpha, \beta \in \calA$ be alternatives such that $X$ is indifferent under $\alpha$ and $\beta$, i.e., $\alpha \equiv_X^\oplus \beta$. Then $\alpha \prec_{Y}^\oplus \beta$ if and only if $\alpha \prec_{Y \setminus X}^\oplus \beta$. Hence, $\alpha \equiv_{Y}^\oplus \beta$ if and only if $\alpha \equiv_{Y \setminus X}^\oplus \beta$.
\end{lem}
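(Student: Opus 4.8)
The plan is to reduce the statement to the single defining inequality of strict monotonicity by splitting the $\oplus$-combination over $Y$ into its part over $X$ and its part over $Y\setminus X$. First I would dispose of the degenerate cases: if $X=\emptyset$ then $Y\setminus X = Y$ and there is nothing to show; if $Y\setminus X=\emptyset$ then $Y=X$, so $\alpha\equiv_Y^\oplus\beta$ holds by hypothesis and neither $\alpha\prec_Y^\oplus\beta$ nor $\alpha\prec_{Y\setminus X}^\oplus\beta$ can hold, so the equivalence is true. Hence I may assume both $X$ and $Y\setminus X$ are non-empty.

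Next, using associativity and commutativity of $\oplus$ together with $X\cap(Y\setminus X)=\emptyset$ and $X\cup(Y\setminus X)=Y$, I would write $\bigoplus_{c\in Y}c(\alpha)=\left(\bigoplus_{c\in X}c(\alpha)\right)\oplus\left(\bigoplus_{c\in Y\setminus X}c(\alpha)\right)$ and the analogous identity for $\beta$. Put $a=\bigoplus_{c\in Y\setminus X}c(\alpha)$ and $b=\bigoplus_{c\in Y\setminus X}c(\beta)$. The hypothesis $\alpha\equiv_X^\oplus\beta$ says precisely $\bigoplus_{c\in X}c(\alpha)=\bigoplus_{c\in X}c(\beta)$; call this common value $w$. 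Then $\alpha\prec_Y^\oplus\beta$ means, by definition, $w\oplus a< w\oplus b$, which by commutativity is $a\oplus w< b\oplus w$, and by strict monotonicity of $\oplus$ (the property $x\oplus y< z\oplus y$ if and only if $x< z$) this holds if and only if $a< b$, i.e. if and only if $\alpha\prec_{Y\setminus X}^\oplus\beta$. This establishes the first equivalence.

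For the ``Hence'' part I would observe that $\prec_C^\oplus$ and $\equiv_C^\oplus$ are defined through the total order $\leq$ on $\mathds{Q}^{\geq 0}$, so $\alpha\equiv_C^\oplus\beta$ is equivalent to ``neither $\alpha\prec_C^\oplus\beta$ nor $\beta\prec_C^\oplus\alpha$''. Since $\equiv_X^\oplus$ is symmetric, the already proved equivalence applies both to the pair $(\alpha,\beta)$ and to the pair $(\beta,\alpha)$; combining the two instances gives that $\alpha\equiv_Y^\oplus\beta$ if and only if $\alpha\equiv_{Y\setminus X}^\oplus\beta$.

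I do not expect a genuine obstacle here; the proof is short and technical, as the paper says. The only points requiring care are the bookkeeping for empty $\oplus$-combinations (handled by the initial case split) and making sure commutativity is invoked so that the common summand $w$, which arises on the left of the combination, can be moved to the right in order to apply the stated ``$y$ on the right'' form of strict monotonicity. Everything else follows directly from associativity, commutativity, and the definitions of $\prec_C^\oplus$ and $\equiv_C^\oplus$.
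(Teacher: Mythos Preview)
Your proposal is correct and follows essentially the same route as the paper: split $\bigoplus_{c\in Y}$ into its $X$- and $(Y\setminus X)$-parts via associativity and commutativity, cancel the common $X$-summand using strict monotonicity and the hypothesis $\alpha\equiv_X^\oplus\beta$, and then obtain the $\equiv$-equivalence by applying the $\prec$-equivalence symmetrically. Your explicit treatment of the empty cases and the commutativity step needed to match the ``$y$ on the right'' form of strict monotonicity are careful additions, but they do not change the argument.
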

	\begin{proof}
		Let $\alpha \prec_{Y}^\oplus \beta$, i.e., $\bigoplus_{c \in Y} c(\alpha) < \bigoplus_{c \in Y} c(\beta)$. 
		Since $\oplus$ is associative and commutative, and $X \subseteq Y$, 
		this is equivalent to $\bigoplus_{c \in Y\setminus X} c(\alpha) \oplus \bigoplus_{c \in X} c(\alpha)< \bigoplus_{c \in Y\setminus X} c(\beta) \oplus \bigoplus_{c \in X} c(\beta)$. 
		By strict monotonicity of $\oplus$ and because $X$ is indifferent under $\alpha$ and $\beta$, 
		this is equivalent to $\bigoplus_{c \in Y\setminus X} c(\alpha) < \bigoplus_{c \in Y\setminus X} c(\beta)$, i.e., $\alpha \prec_{Y \setminus X}^\oplus \beta$.
		Analogously, we can prove $\alpha \succ_{Y}^\oplus \beta$ if and only if $\alpha \succ_{Y \setminus X}^\oplus \beta$.
		Both equivalences together yield $\alpha \equiv_{Y}^\oplus \beta$ if and only if $\alpha \equiv_{Y \setminus X}^\oplus \beta$.
	\end{proof}
	Note, that the previous proof explicitly uses the strict monotonicity of the operator $\oplus$. 
	
We define the (non-commutative) combination of two HCLP models $H=(C_1, \dots, C_l)$ and $H'=(C'_1, \dots, C'_k)$ in $C(t)$ as $H \circ H' = (C_1, \dots, C_l, (C'_1 \setminus \sigma_H), \dots, (C'_k \setminus \sigma_H))$, where $\sigma_H = \bigcup_{i=1, \dots, l} C_i$.
It is easy to see that $H \circ H'$ is an HCLP model in $C(t)$ as well.
The following proposition shows how the satisfaction of preference statements $\Gamma$ from $H'$ persists under combination with sequences of singleton level sets that only satisfy $\Gammadestrict$.

\begin{prop}\label{prop: singletons-first}
	Let $(\langle \calA, \oplus, \calC \rangle, \Gamma)$ be an instance of $C(t)$-PCP. If $H=(c_1, \dots, c_l)$ is a $\Gammadestrict$- satisfying model in $\calC(1)$ and $H' = (C'_1, \dots, C'_k)$ is a $\Gamma$-satisfying model in $C(t)$, then $H \circ H'$ is a $\Gamma$-satisfying model in $C(t)$.
\end{prop}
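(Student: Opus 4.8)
The plan is to fix an arbitrary statement $\vphi \in \Gamma$, abbreviate $\alpha = \alpha_\vphi$, $\beta = \beta_\vphi$, and check directly that $H \circ H' \vDash^\oplus \vphi$; that $H \circ H'$ is a model in $\calC(t)$ has already been observed. The structural fact I would exploit is that $H \circ H' = (\{c_1\},\dots,\{c_l\},\, C'_1\setminus\sigma_H,\dots,C'_k\setminus\sigma_H)$ begins with exactly the singleton level sets of $H$, so comparing $\alpha$ and $\beta$ on this initial segment of $H\circ H'$ is literally the comparison made by $H$. Since $H$ is $\Gammadestrict$-satisfying we have $\alpha \preccurlyeq_H^\oplus \beta$, which splits into two cases.

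Case 1: $\alpha \prec_H^\oplus \beta$. By condition (II) there is $i \le l$ with $\alpha \prec_{\{c_i\}}^\oplus \beta$ and $\alpha \equiv_{\{c_j\}}^\oplus \beta$ for all $j < i$. These are also the first $i$ level sets of $H\circ H'$, so condition (II) holds for $H\circ H'$ and $\alpha \prec_{H\circ H'}^\oplus \beta$, which satisfies $\vphi$ whether or not $\vphi$ is strict.

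Case 2: $\alpha \equiv_H^\oplus \beta$, i.e.\ $c_i(\alpha) = c_i(\beta)$ for every $i \le l$; hence $\alpha \equiv_S^\oplus \beta$ for every $S \subseteq \sigma_H$. Here I would invoke Lemma~\ref{lem: strict mon rule} level-set by level-set: for each $j$, applying it with $Y = C'_j$ and $X = C'_j \cap \sigma_H$ (which is indifferent under $\alpha,\beta$, being a subset of $\sigma_H$) gives $\alpha \prec_{C'_j}^\oplus \beta \iff \alpha \prec_{C'_j\setminus\sigma_H}^\oplus \beta$ and $\alpha \equiv_{C'_j}^\oplus \beta \iff \alpha \equiv_{C'_j\setminus\sigma_H}^\oplus \beta$. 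Now use $H' \vDash^\oplus \vphi$. Either all $C'_j$ are indifferent for $\vphi$ — which forces $\vphi$ to be non-strict — and then all level sets of $H\circ H'$ are indifferent, so $\alpha \equiv_{H\circ H'}^\oplus \beta$ and $\vphi$ is satisfied; or there is a least $j^*$ with $\alpha \prec_{C'_{j^*}}^\oplus \beta$ and $\alpha \equiv_{C'_j}^\oplus \beta$ for $j < j^*$, and then by the equivalences the level set $C'_{j^*}\setminus\sigma_H$, which sits at position $l+j^*$ of $H\circ H'$ and is preceded only by indifferent level sets, witnesses $\alpha \prec_{H\circ H'}^\oplus \beta$ via condition (II). In both subcases $H\circ H' \vDash^\oplus \vphi$. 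Since $\vphi \in \Gamma$ was arbitrary, $H\circ H'$ is $\Gamma$-satisfying.

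The only substantive step is Case 2, and its crux is Lemma~\ref{lem: strict mon rule}: it lets us delete the indifferent evaluations $\sigma_H$ from each level set of $H'$ without changing which level set first breaks the tie for $\vphi$. This is exactly where strict monotonicity of $\oplus$ is needed — with a merely monotonic operator, removing indifferent evaluations could turn a strict comparison into an equality or vice versa — and everything else is routine bookkeeping about the lexicographic order underlying $\preccurlyeq_{H\circ H'}^\oplus$. One small point to keep straight is that the ``all indifferent'' subcase of Case 2 cannot arise when $\vphi$ is strict, since then $H'$ itself would fail to satisfy $\vphi$.
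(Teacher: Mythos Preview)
Your proof is correct and follows essentially the same approach as the paper: split according to whether the initial singleton segment $H$ already strictly decides $\vphi$ or is indifferent on it, and in the indifferent case use Lemma~\ref{lem: strict mon rule} to transfer the per-level-set relations from $C'_j$ to $C'_j\setminus\sigma_H$. The only cosmetic difference is that the paper applies the lemma repeatedly, peeling off one indifferent singleton at a time, whereas you apply it once with $X=C'_j\cap\sigma_H$ after noting that any subset of $\sigma_H$ is indifferent; both are valid.
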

\begin{proof}
	We show, that $H \circ H'$ satisfies $\Gammadestrict$ and strictly satisfies the preference statements that $H'$ strictly satisfies. 
	Hence, $H \circ H'$ is a $\Gamma$-satisfying HCLP model in $C(t)$.
	
	Recall that a preference statement $\vphi$ is strictly satisfied when there exists a level set $C$ supporting $\vphi$, i.e.,  $\alpha_\vphi \prec_C \beta_\vphi$, 
	and all preceding level sets $C'$ are indifferent under $\vphi$, i.e., $\alpha_\vphi \equiv_{C'} \beta_\vphi$. 
	Hence, the preference statements in $\Gamma$ that are strictly satisfied by $H = (c_1, \dots, c_l)$ are also strictly satisfied by $H \circ H' = (c_1, \dots, c_l, (C'_1 \setminus \sigma_H), \dots, (C'_k \setminus \sigma_H))$. 
	Let $\Gamma'$ be the set of remaining preference statements that are not strictly satisfied by $H$. 
	Since $H$  satisfies $\Gammadestrict$,
	$H$ is indifferent under all statements in $\vphi \in \Gamma'$, 
	i.e., $c_i(\alpha_\vphi) = c_i(\beta_\vphi)$ for all $1 \leq i \leq l$. 
	Consider an arbitrary level set $C$ in $H'$ and a preference statement $\vphi \in \Gamma'$. 
	Repeatedly applying Lemma~\ref{lem: strict mon rule} for the singleton level sets in $\sigma_H \cap C$ yields:
	$\alpha_\vphi \sim_{C}^\oplus \beta_\vphi$ if and only if $\alpha_\vphi \sim_{C \setminus \sigma_H}^\oplus \beta_\vphi$, where $\sim$ is one of the relations $\prec$, $\equiv$ or $\succ$. 
	Thus, the level sets $C'_i \setminus \sigma_H$ in $H \circ H'$ have the same relation towards statements $\vphi \in \Gamma'$ as the level sets $C'_i$ in $H'$. 
	Since the initial singleton sequence in $H \circ H'$ is indifferent under preference statements $\vphi \in \Gamma'$, $H \circ H'$ satisfies $\vphi$ if and only if $H'$ satisfies $\vphi$. 
	Also, all statements $\Gamma \setminus \Gamma'$ are strictly satisfied by $H \circ H'$. 
	Hence, $H \circ H'$ satisfies $\Gammadestrict$ and strictly satisfies all statements in $\Gamma$ that $H'$ strictly satisfies. 
\end{proof}

Proposition~\ref{prop: singletons-first} immediately leads to the next result.

\begin{prop}\label{cor: pref max singleton sequences}
	Let $H$ be a $\Gammadestrict$-satisfying model in $C(1)$ that consists of a maximal number of singleton level sets. If $\Gamma$ is $C(t)$-consistent, then there exists a $\Gamma$-satisfying model in $C(t)$ with $H$ as initial sequence.
\end{prop}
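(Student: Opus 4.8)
The statement should follow almost immediately from Proposition~\ref{prop: singletons-first}, so the plan is mostly a matter of assembling the pieces. Assume $\Gamma$ is $C(t)$-consistent; then by definition there is some $\Gamma$-satisfying HCLP model $H' = (C'_1, \dots, C'_k) \in C(t)$. Take the given $\Gammadestrict$-satisfying singleton model $H = (c_1, \dots, c_l) \in C(1) \subseteq C(t)$ and form the combined model $H \circ H'$.

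First I would invoke Proposition~\ref{prop: singletons-first} with exactly these $H$ and $H'$: since $H$ is a $\Gammadestrict$-satisfying model in $\calC(1)$ and $H'$ is a $\Gamma$-satisfying model in $C(t)$, the proposition yields that $H \circ H'$ is a $\Gamma$-satisfying model in $C(t)$. Second, I would observe that, by the very definition of the combination operation, $H \circ H' = (c_1, \dots, c_l, (C'_1 \setminus \sigma_H), \dots, (C'_k \setminus \sigma_H))$, so its first $l$ level sets coincide with those of $H$; that is, $H$ is an initial sequence of $H \circ H'$. This exhibits the required $\Gamma$-satisfying model in $C(t)$ that has $H$ as its initial sequence, which is all that is claimed.

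There is essentially no hard step: all the content is already packed into Proposition~\ref{prop: singletons-first} (and, underneath it, the strict-monotonicity Lemma~\ref{lem: strict mon rule}), which guarantees that deleting the already-used evaluations $\sigma_H$ from the level sets of $H'$ does not change which statements of $\Gamma$ are strictly satisfied, nor the satisfaction of $\Gammadestrict$. The only thing worth flagging is that the maximality hypothesis on $H$ is not actually used in the implication itself — any $\Gammadestrict$-satisfying model in $C(1)$ could serve as the initial sequence — but it is the case of interest for the recursive algorithm: a maximal such $H$ can be computed once in time $O(|\calC| \cdot |\Gamma|)$, and the proposition certifies that the subsequent search for a $\Gamma$-satisfying extension never needs to backtrack past it.
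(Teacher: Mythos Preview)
Your proposal is correct and matches the paper's own treatment exactly: the paper gives no separate proof, merely stating that Proposition~\ref{prop: singletons-first} immediately yields the result, which is precisely the derivation you spell out. Your observation that the maximality of $H$ is not needed for the implication (only for the algorithmic use) is also accurate and worth noting.
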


Based on Proposition~\ref{cor: pref max singleton sequences}, we describe the algorithm PC-check that solves $C(t)$-PCP by trying to construct a $\Gamma$-satisfying HCLP model. This method is summarised in the algorithm below disregarding the framed parts.
After finding an initial singleton sequence $(c_1, \dots, c_k)$ that is maximal while satisfying $\Gammadestrict$ (in time $O(|\calC|\cdot|\Gamma|)$ by a greedy algorithm~\cite{WilsonGOSIJCAI2015longer}), 
we consider every possible (not opposing) level set $C$ of size $2 \leq |C| \leq t$. 
Let $\Gamma'$ be the set of preference statements in $\Gamma$ that are not strictly satisfied by $H=(c_1, \dots, c_k, C)$. 
We try to extend the sequence $H$ by another $\Gamma'$-satisfying HCLP model. 
We construct this extending model by recursively calling the method for the subproblem with statements $\Gamma'$ and evaluations $\calC'=\calC - \{c_1, \dots, c_k\} - C$. 
If no such extension exists (that satisfies $\Gamma$), we backtrack over the last chosen level set $C$ and try a new level set. 
Note, that by Proposition~\ref{cor: pref max singleton sequences}, we never have to backtrack over the choice of singleton level sets, which can be a significant advantage over solving the MILP model.
As soon as the currently considered sequence in the algorithm satisfies $\Gamma$ we stop and return the sequence, i.e., showing that the instance is $C(t)$-consistent.
If no $\Gamma$-satisfying sequence can be found after exploiting all possible ($\Gammadestrict$-satisfying) HCLP model, we stop and return that the instance is $C(t)$-inconsistent.

\smallskip\noindent\textbf{Maintaining Conflicting Sets:}
In the following, we extend the algorithm PC-check($\calC, \Gamma, \oplus, t$) by maintaining conflicting sets that cannot be contained in the later level sets, and thus reduce the number of backtracks. 
Proposition~\ref{prop: conf-set-property} shows that the satisfaction of $\Gamma$ persists for an HCLP model $H'$ when combining with an HCLP model $H$ that extends an initial sequence of level sets of $H'$ by one more level set and only satisfies $\Gammadestrict$.

\begin{prop}\label{prop: conf-set-property}
	Let $(\langle \calA, \oplus, \calC \rangle, \Gamma)$ be an instance of $C(t)$-PCP that is $C(t)$-consistent. 
	Let $H = (C_1, \dots, C_k, B)$ be a $\Gammadestrict$-satisfying HCLP model in $C(t)$ and 
	let $H' = (C_1, \dots, C_k,C_{k+1}, \dots, C_l)$ be a $\Gamma$-satisfying HCLP model in $C(t)$ with $B \subseteq C_j$ for some $k+1 \leq j \leq l$.
	Then $H \circ H'  = (C_1, \dots, C_k, B, C_{k+1}, \dots, (C_j \setminus B), \dots, C_l)$ is a $\Gamma$-satisfying HCLP model in $C(t)$.
\end{prop}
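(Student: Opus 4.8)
The plan is to follow the template of the proof of Proposition~\ref{prop: singletons-first}: to show that $H \circ H'$ is $\Gamma$-satisfying it suffices to check that $H \circ H'$ satisfies $\Gammadestrict$ and that it strictly satisfies every statement that $H'$ strictly satisfies; since $H' \vDash^\oplus \Gamma$, the model $H'$ strictly satisfies every strict statement of $\Gamma$, so this would finish the proof. The fact that $H \circ H'$ is a model in $C(t)$ is already covered by the remark following the definition of $\circ$ (the only genuinely new level set $B$ has $|B| \le t$ because $H \in C(t)$, and $|C_j \setminus B| \le |C_j| \le t$), and the empty level sets $C_i \setminus \sigma_H$ for $i \le k$ can be discarded since empty level sets do not affect the induced relations; these preliminaries I would dispatch quickly.

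First I would partition $\Gamma$ using the prefix $(C_1, \dots, C_k)$ that is common to $H$, $H'$ and the start of $H \circ H'$. Statements of $\Gamma$ that are already strictly satisfied inside $(C_1, \dots, C_k)$ are plainly still strictly satisfied by $H \circ H'$. Let $\Gamma''$ be the remaining statements. The key observation — the analogue of the ``the prefix is indifferent'' step in Proposition~\ref{prop: singletons-first} — is that for every $\vphi \in \Gamma''$ each of $C_1, \dots, C_k$ is indifferent under $\vphi$: otherwise the first non-indifferent $C_i$ ($i \le k$) would have to oppose $\vphi$ (it does not support it, since $\vphi$ is not strictly satisfied by the prefix), which forces $\alpha_\vphi \not\preccurlyeq_H^\oplus \beta_\vphi$, contradicting $H \vDash^\oplus \Gammadestrict$. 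The same reasoning applied to $H = (C_1, \dots, C_k, B)$ shows that for $\vphi \in \Gamma''$ the level set $B$ cannot oppose $\vphi$: it either supports $\vphi$ or is indifferent under it.

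Then I would fix $\vphi \in \Gamma''$ and split into the two cases for $B$. If $B$ supports $\vphi$ (i.e.\ $\alpha_\vphi \prec_B^\oplus \beta_\vphi$), then in $H \circ H'$ the levels $C_1, \dots, C_k$ are indifferent under $\vphi$ and the next level $B$ supports it, so $\alpha_\vphi \prec_{H \circ H'}^\oplus \beta_\vphi$ and $\vphi$ is strictly satisfied. If $B$ is indifferent under $\vphi$, then the prefix $(C_1, \dots, C_k, B)$ of $H \circ H'$ is indifferent under $\vphi$, and I would invoke Lemma~\ref{lem: strict mon rule} with $X = B \subseteq C_j = Y$ to get $\alpha_\vphi \sim_{C_j}^\oplus \beta_\vphi \iff \alpha_\vphi \sim_{C_j \setminus B}^\oplus \beta_\vphi$ for each $\sim \in \{\prec,\equiv,\succ\}$; for every other $i$ with $k < i \le l$ we have $C_i \setminus \sigma_H = C_i$ by pairwise disjointness of the level sets of $H'$ together with $B \subseteq C_j$, so the sequence of relations ($\prec$, $\equiv$, $\succ$) that $\vphi$ induces on $C_{k+1}, \dots, (C_j \setminus B), \dots, C_l$ is identical to the one it induces on $C_{k+1}, \dots, C_j, \dots, C_l$. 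Hence the lexicographic comparison of $\alpha_\vphi$ and $\beta_\vphi$ yields the same outcome in $H \circ H'$ as in $H'$, so $H \circ H'$ satisfies $\vphi$ iff $H'$ does, and strictly iff $H'$ does. Combining the cases, every $\vphi \in \Gamma''$ is satisfied by $H \circ H'$, and every strict one strictly so, which together with the prefix-satisfied statements gives $H \circ H' \vDash^\oplus \Gamma$.

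The part I expect to need the most care is the case analysis around the inserted level set $B$: verifying that $B$ cannot oppose any statement of $\Gamma''$ (using $H \vDash^\oplus \Gammadestrict$), and then that simultaneously inserting $B$ and deleting it from $C_j$ preserves, level by level, the relation each $\vphi \in \Gamma''$ induces on all later levels. This last point is precisely where strict monotonicity of $\oplus$ is used, through Lemma~\ref{lem: strict mon rule}. The $C(t)$-consistency assumption on the instance is not actually needed beyond the assumed existence of $H'$.
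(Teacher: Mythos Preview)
Your proposal is correct and follows essentially the same approach as the paper's proof. The only organizational difference is that the paper partitions $\Gamma$ by what the full model $H=(C_1,\dots,C_k,B)$ strictly satisfies (so that for the remaining set $\Gamma'$ all of $C_1,\dots,C_k,B$ are automatically indifferent), whereas you partition by the shorter prefix $(C_1,\dots,C_k)$ and then perform an explicit case split on whether $B$ supports or is indifferent under $\vphi\in\Gamma''$; your two sub-cases together are exactly the paper's two cases, and the use of Lemma~\ref{lem: strict mon rule} is identical.
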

	\begin{proof}
		We show that the HCLP model $H \circ H' =(C_1, \dots, C_k,B,C_{k+1}, \dots,C_{j-1}, C_j \setminus B, C_{j+1}, \dots, C_l)$ satisfies $\Gammadestrict$ 
		and strictly satisfies all preference statements that are strictly satisfied by $H'$.
		Hence, $H \circ H'$ is a $\Gamma$-satisfying HCLP model in $C(t)$.
		
		A preference statement $\vphi$ is strictly satisfied when there exists a level set $C$ supporting $\vphi$, i.e.,  $\alpha_\vphi \prec_C \beta_\vphi$, 
		and all preceding level sets $C'$ are indifferent under $\vphi$, i.e., $\alpha_\vphi \equiv_{C'} \beta_\vphi$. 
		Hence, the preference statements in $\Gamma$ that are strictly satisfied by $H = (C_1, \dots, C_k, B)$ are also strictly satisfied by $H \circ H'$. 
		In the following, we consider all remaining preference statements. 
		Let $\Gamma'$ be the set of preference statements that are not strictly satisfied by $H$. 
		Since $H$ satisfies $\Gammadestrict$,
		the sequence $(C_1, \dots, C_k, B)$ is indifferent under all statements in $\Gamma'$, 
		i.e., $\bigoplus_{c \in C} c(\alpha_\vphi) = \bigoplus_{c \in C} c(\beta_\vphi)$ for all $\vphi \in \Gamma'$ and $C \in \{C_1, \dots, C_k, B\}$. 
		The level sets $C_i$ with $k+1 \leq i \leq l$ and $i \neq j$ are level sets in both $H'$ and $H \circ H'$. 
		Hence, for the satisfaction of statements in $\Gamma'$ we only need to compare the level set $C_j$ in $H'$ to the level set $C_j \setminus B$ in $H \circ H'$.
		Consider a preference statement $\vphi \in \Gamma'$.
		Since $B$ is indifferent under $\vphi$, by Lemma~\ref{lem: strict mon rule}, 
		$\alpha_\vphi \sim_{C_j}^\oplus~\beta_\vphi$ if and only if $\alpha_\vphi \sim_{C_j \setminus B}^\oplus \beta_\vphi$, 
		where $\sim$ is one of the relations $\prec$, $\equiv$ or $\succ$. 
		Thus, for $\vphi \in \Gamma'$ all level sets $C_{k+1}, \dots, C_l$ in $H$ have the same relation towards $\vphi$ 
		as the level sets $C_{k+1}, \dots,C_{j-1}, C_j \setminus B, C_{j+1}, \dots, C_l$ in $H'$. 
		Since the initial sequence $(C_1, \dots, C_k, B)$ is indifferent under preference statements $\vphi \in \Gamma'$, $H \circ H'$ satisfies $\vphi$ if and only if $H'$ satisfies $\vphi$. 
		Furthermore, all statements $\Gamma \setminus \Gamma'$ are strictly satisfied by $H \circ H'$. 
		We have shown, that $H \circ H'$ satisfies $\Gammadestrict$ and strictly satisfies all preference statements that $H'$ strictly satisfies. 
	\end{proof}
	
	Reformulating Proposition~\ref{prop: conf-set-property} yields the following statement.
	\begin{prop}\label{cor: conf sets}
		Let $H = (C_1, \dots, C_k, B)$ be a $\Gammadestrict$-satisfying HCLP model in $C(t)$. 
		If there exists no extension $(C_1, \dots, C_k, B, C_{k+1}, \dots, C_{l})$ of $H$ in $C(t)$ 
		that satisfies $\Gamma$, then for all $\Gamma$-satisfying HCLP model $H' = (C_1, \dots, C_k,C_{k+1}, \dots, C_l)$  in $C(t)$, 
		$B \nsubseteq C_j$ for all $k+1 \leq j \leq l$.
	\end{prop}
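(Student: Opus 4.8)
The plan is to obtain this almost immediately from Proposition~\ref{prop: conf-set-property}, since the statement is essentially its contrapositive repackaged in the language of conflicting sets.

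First I would fix a $\Gammadestrict$-satisfying HCLP model $H = (C_1, \dots, C_k, B)$ in $C(t)$, assume that $H$ has no $\Gamma$-satisfying extension of the form $(C_1, \dots, C_k, B, C_{k+1}, \dots, C_l)$ in $C(t)$, and take an arbitrary $\Gamma$-satisfying HCLP model $H' = (C_1, \dots, C_k, C_{k+1}, \dots, C_l)$ in $C(t)$. The goal is to show $B \nsubseteq C_j$ for every $k+1 \le j \le l$, so I would argue by contradiction and suppose $B \subseteq C_j$ for some such $j$.

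Next I would verify that all hypotheses of Proposition~\ref{prop: conf-set-property} are in place: the existence of the $\Gamma$-satisfying model $H'$ witnesses that the instance $(\langle \calA, \oplus, \calC \rangle, \Gamma)$ is $C(t)$-consistent; $H$ is $\Gammadestrict$-satisfying in $C(t)$ by assumption; $H'$ shares the initial segment $(C_1, \dots, C_k)$ with $H$, is $\Gamma$-satisfying in $C(t)$, and has $B \subseteq C_j$ by the contradiction hypothesis. Proposition~\ref{prop: conf-set-property} then yields that $H \circ H' = (C_1, \dots, C_k, B, C_{k+1}, \dots, C_{j-1}, C_j \setminus B, C_{j+1}, \dots, C_l)$ is a $\Gamma$-satisfying HCLP model in $C(t)$.

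Finally I would observe that $H \circ H'$ begins with the level sets $C_1, \dots, C_k, B$, hence is precisely an extension of $H$ of the forbidden form that satisfies $\Gamma$ in $C(t)$ --- if $B = C_j$, one first deletes the resulting empty level set $C_j \setminus B$, which by the remark in Section~\ref{sec:hierarchical models} changes neither $C(t)$-membership nor the induced order $\preccurlyeq_H^\oplus$. This contradicts the assumption that no such extension exists, so $B \nsubseteq C_j$ for all $j$. There is no genuinely hard step here: the content has been front-loaded into Proposition~\ref{prop: conf-set-property}, and the only points requiring care are (i) noting that $H'$ itself certifies the $C(t)$-consistency needed to invoke that proposition, and (ii) confirming that the concatenation $H \circ H'$ literally has $(C_1, \dots, C_k, B)$ as its initial sequence and therefore qualifies as an extension of $H$.
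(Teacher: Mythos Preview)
Your proof is correct and follows exactly the approach the paper intends: the paper simply states that ``reformulating Proposition~\ref{prop: conf-set-property} yields the following statement,'' and your argument spells out that contrapositive carefully, including the minor bookkeeping about $C(t)$-consistency and empty level sets that the paper leaves implicit.
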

	
	This proposition characterises the conflicting sets $B$ that are maintained in the second recursive approach. 
	By Proposition~\ref{cor: conf sets}, no $\Gamma$-satisfying HCLP model that extends $(C_1, \dots, C_k)$ can contain the conflicting set $B$.
	Thus, at a point of the algorithm where we backtrack because no $\Gamma$-satisfying extension of the current HCLP model can be found
	we add the last considered level set to the list of conflicting sets and then choose a new next level set that does not contain any conflicting set.
	This extension of the algorithm PC-check($\calC, \Gamma, \oplus, t$) is given as the framed parts in the algorithm below.
	Here, although reducing the search space, we have to maintain a list of conflicting sets which can grow exponentially large. 
	Thus, it is not obvious if maintaining conflicting sets is advantageous. 
	We introduce the additional parameter $s$ which is a cardinality bound on the size of the conflicting sets. 
	Only conflicting sets $C$ with $|C| \leq s$ are maintained, such that the space needed is $O(s \cdot \binom{n}{s})$ for $s \leq n/2$, and $O(s \cdot \binom{n}{n/2})$ otherwise.

\begin{figure}[ht]
\noindent\fbox{
\begin{minipage}{.945\columnwidth}
 \textbf{Algorithm:} PC-check($\calC, \Gamma, \oplus, t,$ \mybox{$\calS=\emptyset,  s$} )\\
$H \leftarrow (c_1, \dots, c_k)$ some $\Gammadestrict$-satisfying singleton sequence with $k$ maximal\;\\
\textbf{if} $H \vDash \Gamma$ \textbf{then}  \textbf{return} {$H$}\;\\
\textbf{for all} $C \subseteq \calC - \{c_1, \dots, c_k\}$ with $2~\leq~|C|~\leq~t$ and $\alpha_\vphi \preccurlyeq_{C}^{\oplus} \beta_\vphi$ for all $\vphi$ with $\alpha_\vphi \equiv_H^\oplus \beta_\vphi$
\mybox{such that there exists no $S \in \calS$ with $S \subseteq C$\;} \textbf{do}

\hspace{.3cm}$H \leftarrow (c_1, \dots, c_k,C)$\;

\hspace{.3cm}\textbf{if} $H \vDash \Gamma$ \textbf{then}  \textbf{return} {$H$}

\hspace{.3cm}\textbf{else} \textbf{if} $H \vDash \Gammadestrict$ \textbf{then}

\hspace{.53cm}$\Gamma'  = \{\vphi \in \Gamma \;|\; \alpha_\vphi \equiv_H^\oplus \beta_\vphi\}$, $\calC' = \calC - \sigma(H)$\;

\hspace{.53cm}$H \leftarrow (c_1, \dots, c_k,C,$PC-check($\calC', \Gamma', \oplus, t,$ \mybox{$\calS,  s$} ))\;

\hspace{.53cm}\textbf{if} $H \vDash \Gamma$ \textbf{then} \textbf{return} {$H$} \mybox{\textbf{else} $\calS \leftarrow \calS \cup {C}$ for $|C|\leq s$\;}\\
 \textbf{return} \emph{Inconsistent}
 \end{minipage}
}
\end{figure}

\begin{ex}
Consider a $\calC(3)$-PCP instance with the following evaluation functions $c_1, \dots, c_5$ and statements on $\alpha, \beta, \gamma, \delta$. 
 \begin{table}[h!]
\centering
\resizebox{.55\columnwidth}{!}{
\begin{tabular}{l|c|c|c|c|c|c|c}
 & $\alpha$ & $\leq$ & $\beta$ & $\leq$ & $\gamma$ & $<$ & $\delta$ \\
\hline
\hline
$c_1$ & 1 & & 0 & & 0 & & 0 \\ \hline
$c_2$ & 0 & & 2 & & 2 & & 2 \\ \hline
$c_3$ & 1 & & 1 & & 0 & & 1 \\ \hline
$c_4$ & 0 & & 2 & & 1 & & 1 \\ \hline
$c_5$ & 2 & & 0 & & 1 & & 0 
\end{tabular}
}
\label{tab: ex2}
\end{table}

\noindent Let $\oplus$ be the standard addition on $\mathds{Z}$. Suppose, in the first step PC-check finds the maximal singleton sequence $(c_2, c_1)$ (which cannot be extended by any other evaluation without violating $\Gammadestrict$). Then the algorithm will in turn consider sets $\{c_3,c_4\}$, $\{c_3,c_5\}$, $\{c_4,c_5\}$ and  $\{c_3,c_4,c_5\}$. The sequences $c_2,c_1,\{c_3,c_4\}$ and $c_2,c_1,\{c_4,c_5\}$ violate $\Gammadestrict$. The sequence $c_2,c_1,\{c_3,c_5\}$ satisfies $\Gammadestrict$ but cannot be extended to satisfy $\Gamma$. In PC-check($\calC, \Gamma, \oplus, t,\calS,  s$), the set $\{c_3,c_5\}$ is added to the conflicting sets $\calS$ and thus the set $\{c_3,c_4,c_5\}$ does not have to be checked (by Proposition~\ref{prop: conf-set-property}). PC-check($\calC, \Gamma, \oplus, t$) finds that $c_2,c_1,\{c_3,c_4,c_5\}$ violates $\Gammadestrict$. Thus none of the possible sets leads to a $\Gamma$-satisfying sequence and ``Inconsistent'' is returned. Note, that PC-check does not have to backtrack over the choice of evaluations in the initial singleton set sequence $c_2,c_1$ (by Proposition~\ref{cor: pref max singleton sequences}).
\end{ex}

\section{Experimental Runtime Comparisons}\label{sec: experiments}

In our experiments, we compare the approaches from Section~\ref{sec: MILP} and~\ref{sec: rec algo} for solving PCP by their running time. Here, the MILP formulation functions as a baseline and is expected to be outperformed by the two recursive approaches as they directly exploit the problem structure to perform less backtracks in a way that is not recognized by CPLEX. Note, that it is not obvious how to incorporate the pruning of the search space, that is preformed by the recursive algorithms, in a MILP model in form of constraints or heuristics. 
We investigate the degree of improvement of the recursive algorithms towards the rather simple MILP formulation and the relation of the recursive algorithms towards each other. Though PC-check($\calC, \Gamma, \oplus, t,\calS,  s$) prunes the search space further than PC-check($\calC, \Gamma, \oplus, t$), the list of maintained conflicting sets can grow exponentially large. Thus, it is not obvious if maintaining conflicting sets is advantageous.

\smallskip\noindent\textbf{Instances:}
For our experiments we considered different instance sizes, to observe the effect on the running time by varying the number of evaluations $n$ and the number of preference statements $g$. For the lack of real world data, we generated 50 instances uniformly at random with evaluation functions with domains $\{0,1,2,3,4,5\}$ for each of the problem sizes $n,g \in \{10, 15, \dots, 50\}$ where we fix the number of alternatives that the preference statements are based on to $m=25$. 
	First an $n \times m$ matrix is generated that gives the values of the evaluation functions for the alternatives. We next draw $g$ tuple of alternatives $(\alpha_i,\alpha_j)$ with $i<j$ uniformly at random (without repetition) such that the corresponding preference statement $\alpha_i \leq \alpha_j$ or $\alpha_i < \alpha_j$ coincides with the linear order $\alpha_1 < \dots < \alpha_m$. This way, we avoid cycles in the statements, which trivially lead to inconsistency. The first $\lceil g/2 \rceil$ statements are the strict statements, the remaining statements are the non-strict statements.	
Note, that not all alternatives generated are involved in preference statements. Thus, $m$ has no direct influence on the size of the search space or the running time. For time reasons, some experiments were not conducted for all instance sizes.

\smallskip\noindent\textbf{Implementation:}
We implemented all three approaches in Java Version 1.8 using the IBM ILOG CPLEX (version 12.6.2) library for the MILP formulation.
All experiments were conducted independently on a 2.66Ghz quad-core processor with 12GB memory.

 We choose $\oplus$ as the standard addition on the integers.
To reduce the number of experiments, we allow the cardinality bound on the level sets to be $t=n$, the number of evaluations, and fix the cardinality bound on the maintained conflicting sets to $s=5$ (which gives the bound $|\calS| \leq \binom{n}{s} \leq \binom{50}{5} = 2118760$). Since $\calC(k') \subseteq \calC(k)$ for all $k' < k$, we expect the running times to be lower for smaller $t$. Also, $\calC(k)=\calC(n)$ for all $k \geq n$, i.e., the running times are the same for bigger~$t$.
  	For the recursive algorithms we enumerate next level sets with lower cardinality before ones with higher cardinality, and level sets containing evaluations with smaller indexes before ones with higher indexes.

\smallskip\noindent\textbf{Experimental Results:}
As expected, solving the MILP formulation of PCP (as presented in Section~\ref{sec: MILP}) by the CPLEX solver is much slower than by the two recursive algorithms PC-check (as presented in Section~\ref{sec: rec algo}), see Table~\ref{tab: RT gap}.
However, it is remarkable how quickly the ratio between the mean times of solving the MILP and PC-check grows with the number of statements and evaluations in the instances.
\begin{table}[ht]
\centering
\resizebox{\columnwidth}{!}{
\begin{tabular}{l|l|c|c|c}
$g=$				& 								& $n=10$	& $n=15$	& $n=20$	\\ \hline \hline
\multirow{4}{*}{10}	& PC-check($\calC, \Gamma, \oplus, t$)			&0.011	&0.01		&0.04		\\ \cline{3-5}
				& PC-check($\calC, \Gamma, \oplus, t, \calS, s$)	&0.003	&0.01		&0.03		\\ \cline{3-5}
				& MILP							&0.22		&10.53	&149.81	\\ \cline{3-5}
				& minimal ratio: MILP/PC-check				&20		&1053		&3745.25	\\ \hline
\multirow{4}{*}{15}	& PC-check($\calC, \Gamma, \oplus, t$)			&0.003	&0.01		&0.29		\\ \cline{3-5}
				& PC-check($\calC, \Gamma, \oplus, t, \calS, s$)	&0.001	&0.01		&0.28		\\ \cline{3-5}
				& MILP							&0.22		&220.28	&$>$41472		\\ \cline{3-5}
				& minimal ratio: MILP/PC-check				&73.33	&22028	&$>$143006.89
\end{tabular}
}
\caption{Mean times in seconds to solve PCP with the MILP formulation over 25 and with PC-check over 50 instances.}
\label{tab: RT gap}
\end{table}

The two algorithms PC-check($\calC, \Gamma, \oplus, t$) and PC-check($\calC, \Gamma, \oplus, t, \calS, s$) show similar behavior of running times for different instance sizes. Figure~\ref{fig: rec plots} shows some of the running times for PC-check($\calC, \Gamma, \oplus, t$) and PC-check($\calC, \Gamma, \oplus, t, \calS, s$). Here, we can see that the running times increase with the number of evaluations $n$ and the number of statements $g$.

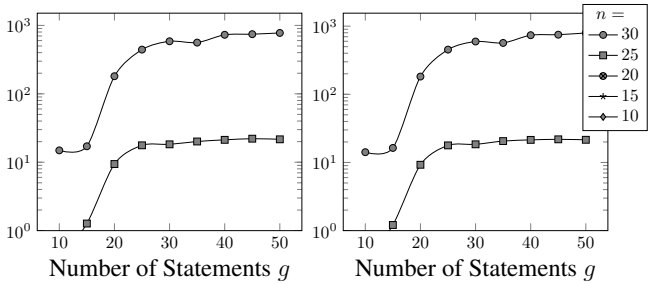
\begin{figure}[ht]
\resizebox{\columnwidth}{!}{
\begin{tikzpicture}
\begin{axis}[
cycle list name=black white,
smooth,
ymode = log,
  xlabel= \Large Number of Statements $g$,
  ymin = 1,
  legend style={ 
        at={(1.15,.6)}, 
        anchor= east
      }
  ]
\addplot coordinates{
(10,14.9)
(15,17.1)
(20,180.83)
(25,442.79)
(30,586.16)
(35,560.51)
(40,729)
(45,744.17)
(50,776)
};
\addplot coordinates{
(10,0.38)
(15,1.27)
(20,9.41)
(25,17.67)
(30,18.27)
(35,20.09)
(40,21.24)
(45,22.08)
(50,21.62)
};
\addplot coordinates{
(10,0.04)
(15,0.29)
(20,0.42)
(25,0.51)
(30,0.53)
(35,0.54)
(40,0.56)
(45,0.58)
(50,0.57)
};
\addplot coordinates{
(10,0.01)
(15,0.01)
(20,0.02)
(25,0.02)
(30,0.02)
(35,0.02)
(40,0.02)
(45,0.02)
(50,0.02)
};
\addplot coordinates{
(10,0.011)
(15,0.003)
(20,0.006)
(25,0.003)
(30,0.003)
(35,0.003)
(40,0.003)
(45,0.003)
(50,0.003)
};
\end{axis}
\end{tikzpicture}
\begin{tikzpicture}
\begin{axis}[
cycle list name=black white,
smooth,
ymode = log,
  xlabel= \Large Number of Statements $g$,
  ymin = 1,
  legend style={ 
        at={(1.15,.75)}, 
        anchor= east
      }
  ]
\addlegendimage{empty legend}
\addlegendentry{\hspace{-.8cm}\textbf{$n=$}}

\addlegendentry{$30$}
\addplot coordinates{
(10,14.16)
(15,16.34)
(20,182.09)
(25,452.74)
(30,595.09)
(35,567.16)
(40,736.87)
(45,749.54)
(50,795)
};
\addlegendentry{$25$}
\addplot coordinates{
(10,0.36)
(15,1.21)
(20,9.25)
(25,17.81)
(30,18.47)
(35,20.6)
(40,21.45)
(45,21.86)
(50,21.51)
};
\addlegendentry{$20$}
\addplot coordinates{
(10,0.03)
(15,0.28)
(20,0.42)
(25,0.51)
(30,0.54)
(35,0.54)
(40,0.57)
(45,0.57)
(50,0.57)
};
\addlegendentry{$15$}
\addplot coordinates{
(10,0.01)
(15,0.01)
(20,0.02)
(25,0.01)
(30,0.02)
(35,0.02)
(40,0.02)
(45,0.02)
(50,0.02)
};
\addlegendentry{$10$}
\addplot coordinates{
(10,0.003)
(15,0.001)
(20,0.003)
(25,0.001)
(30,0.001)
(35,0.001)
(40,0.001)
(45,0.001)
(50,0.001)
};
\end{axis}
\end{tikzpicture}
}
\caption{Mean times in seconds of PC-check($\calC, \Gamma, \oplus, t$) (left) and  PC-check($\calC, \Gamma, \oplus, t, \calS, s$) (right).}
\label{fig: rec plots}
\end{figure}

For time reasons experiments were not conducted for sizes $n \in \{40, 45, 50\}, g \in \{25, 30, \dots, 50\}$, and were only run for 5 instances (instead of 50) for some larger instance sizes. This may explain the irregularity at $n = 35 , g = 30$.
Table~\ref{tab: PC-check times} shows all running times for algorithms PC-check. 

\begin{table*}[t]
\centering
\resizebox{.78\textwidth}{!}{
\begin{tabular}{l|l|c|c|c|c|c|c|c|c|c}
$g=$ 				& 								& $n=10$	& $n=15$	& $n=20$	& $n=25$	& $n=30$	& $n=35$	& $n=40$	& $n=45$	& $n=50$	\\ \hline \hline
\multirow{3}{*}{10}	& PC-check($\calC, \Gamma, \oplus, t$)			&0.011	&0.01		&0.04		&0.38		&14.9		&0.005	&0.004	&0.004 	&0.004	\\ \cline{3-11}
				& PC-check($\calC, \Gamma, \oplus, t, \calS, s$)	&0.003	&0.01		&0.03		&0.36		&14.16	&0.003	&0.003	&0.004	&0.003	\\ \cline{3-11}
				& ratio							&3.67		&1		&1.33		&1.06		&1.05		&1.67		&1.33		&1		&1.33		\\ \hline 
\multirow{3}{*}{15}	& PC-check($\calC, \Gamma, \oplus, t$)			&0.003	&0.01		&0.29		&1.27		&17.1		&522.26	&1.06		&0.06		&7.93		\\ \cline{3-11}
				& PC-check($\calC, \Gamma, \oplus, t, \calS, s$)	&0.001	&0.01		&0.28		&1.21		&16.34	&500.13	&0.97		&0.05		&7.26		\\ \cline{3-11}
				& ratio							&3		&1		&1.04		&1.05		&1.05		&1.04		&1.09		&1.2		&1.09		\\ \hline
\multirow{3}{*}{20}	& PC-check($\calC, \Gamma, \oplus, t$)			&0.006	&0.02		&0.42		&9.41		&180.83	&2165.83	&$>$9072	&$>$7344	&2010.45	\\ \cline{3-11}
				& PC-check($\calC, \Gamma, \oplus, t, \calS, s$)	&0.003	&0.02		&0.42		&9.25		&182.09	&2159.93	&$>$9072	&$>$7344	&2025.72	\\ \cline{3-11}
				& ratio							&2		&1		&1		&1.02		&0.99		&1.00		&		&		&0.99		\\ \hline
\multirow{3}{*}{25}	& PC-check($\calC, \Gamma, \oplus, t$)			&0.003	&0.02		&0.51		&17.67	&442.79	&5393.47*	&		&		&		\\ \cline{3-11}
				& PC-check($\calC, \Gamma, \oplus, t, \calS, s$)	&0.001	&0.01		&0.51		&17.81	&452.74	&5377.27*	&		&		&		\\ \cline{3-11}
				& ratio							&3		&2		&1		&0.99		&0.98		&1.00		&		&		&		\\ \hline
\multirow{3}{*}{30}	& PC-check($\calC, \Gamma, \oplus, t$)			&0.003	&0.02		&0.53		&18.27	&586.16	&6.57*	&		&		&		\\ \cline{3-11}
				& PC-check($\calC, \Gamma, \oplus, t, \calS, s$)	&0.001	&0.02		&0.54		&18.47	&595.09	&6.32*	&		&		&		\\ \cline{3-11}
				& ratio							&3		&1		&0.98		&0.99		&0.98		&1.04		&		&		&		\\ \hline
\multirow{3}{*}{35}	& PC-check($\calC, \Gamma, \oplus, t$)			&0.003	&0.02		&0.54		&20.09	&560.51	&16796.17*	&		&		&		\\ \cline{3-11}
				& PC-check($\calC, \Gamma, \oplus, t, \calS, s$)	&0.001	&0.02		&0.54		&20.6		&567.16	&16503.98*	&		&		&		\\ \cline{3-11}
				& ratio							&3		&1		&1		&0.98		&0.99		&1.02		&		&		&		\\ \hline
\multirow{3}{*}{40}	& PC-check($\calC, \Gamma, \oplus, t$)			&0.003	&0.02		&0.56		&21.24	&729		&24494.72*	&		&		&		\\ \cline{3-11}
				& PC-check($\calC, \Gamma, \oplus, t, \calS, s$)	&0.001	&0.02		&0.57		&21.45	&736.87	&24269.56*	&		&		&		\\ \cline{3-11}
				& ratio							&3		&1		&0.98		&0.99		&0.99		&1.01		&		&		&		\\ \hline
\multirow{3}{*}{45}	& PC-check($\calC, \Gamma, \oplus, t$)			&0.003	&0.02		&0.58		&22.08	&744.17	&23180.91*	&		&		&		\\ \cline{3-11}
				& PC-check($\calC, \Gamma, \oplus, t, \calS, s$)	&0.001	&0.02		&0.57		&21.86	&749.54	&22886.84*	&		&		&		\\ \cline{3-11}
				& ratio							&3		&1		&1.02		&1.01		&0.99		&1.01		&		&		&		\\ \hline
\multirow{3}{*}{50}	& PC-check($\calC, \Gamma, \oplus, t$)			&0.003	&0.02		&0.57		&21.62	&776		&27084.24*	&		&		&		\\ \cline{3-11}
				& PC-check($\calC, \Gamma, \oplus, t, \calS, s$)	&0.001	&0.02		&0.57		&21.51	&795		&26955.88*	&		&		&		\\ \cline{3-11}
				& ratio							&3		&1		&1		&1.01		&0.98		&1.00		&		&		&		
\end{tabular}
}
\caption{Mean times of PC-check in seconds fixing $m=25$ and ratios of the mean times between PC-check($\calC, \Gamma, \oplus, t$) and PC-check($\calC, \Gamma, \oplus, t, \calS, s$) rounded to the nearest hundredth. *Mean time over 5 instances only. (All remaining are mean times over 50 instances.)}
\label{tab: PC-check times}
\end{table*}

A detailed analysis shows that in 57\% of the measured instance sizes  PC-check($\calC, \Gamma, \oplus, t$) is slower than PC-check($\calC, \Gamma, \oplus, t, \calS, s$). 
The ratios of the mean running times of the two algorithms demonstrate that PC-check($\calC, \Gamma, \oplus, t$) is at most 3.67 times slower than PC-check($\calC, \Gamma, \oplus, t, \calS, s$) and PC-check($\calC, \Gamma, \oplus, t, \calS, s$) is at most 1.03 times slower than PC-check($\calC, \Gamma, \oplus, t$). 
This insignificant difference between the running times of the two algorithms means that for these instances reducing the number of backtracks by ruling out conflicting sets is not worth the effort and space needed to maintain the list of conflicting sets (of size $<5$). 

To find an explanation for the behavior of the running times as shown in Figure~\ref{fig: rec plots},  
we observed the occurrence of instances that have solutions in $\calC(1)$, in $\calC(t)$ with $t>1$, or are inconsistent.
The whole search space must be explored until deciding inconsistency for all $t$, which can lead to high running times. PC-check solves $\calC(1)$ instances in polynomial time. The instance distribution may suggest that the running times go up with the number of inconsistent instances, e.g., see Figure~\ref{fig: percentages}.

\begin{figure}[ht]
\resizebox{\columnwidth}{!}{
\begin{tikzpicture}
\begin{axis}[
  height=4.8cm,
  width=12cm,
  ybar,
  enlargelimits=0.02,
  xlabel={Number of Statements},
  ylabel={Percentage},
  extra x ticks = {5, 10, 15, 20, 25, 30, 35, 40, 45, 50, 55},
  extra y ticks = {20, 40, 60,  80, 100, 110},
  legend style={at={(1,0.99)},anchor=south east,legend columns=-1},
  ybar interval=0.6,
  cycle list = {black,black!70,black!40,black!10}
   ] 
\addplot+[] coordinates{
(10, 2.0)(15, 2.0)(20, 24.0)(25, 58.0)(30, 76.0)(35, 72.0)(40, 92.0)(45, 94.0)(50, 100.0)(55,0)
};
\addplot+[fill,text=black] coordinates{
(10, 78.0)(15, 94.0)(20, 76.0)(25, 42.0)(30, 24.0)(35, 28.0)(40, 8.0)(45, 6.0)(50, 0.0)(55,0)
};
\addplot+[fill,text=black] coordinates{
(10, 20.0)(15, 4.0)(20, 0.0)(25, 0.0)(30, 0.0)(35, 0.0)(40, 0.0)(45, 0.0)(50, 0.0)(55,0)
};
\legend{{Inconsistent for all $t$, },{$C(t)$-consistent, $t>1$, }, $C(1)$-consistent}
\end{axis}
\end{tikzpicture}
}
\caption{Percentages of instance classes for the tested random instances with $n=30$ (top) and $n=35$ (bottom).}
\label{fig: percentages}
\end{figure}
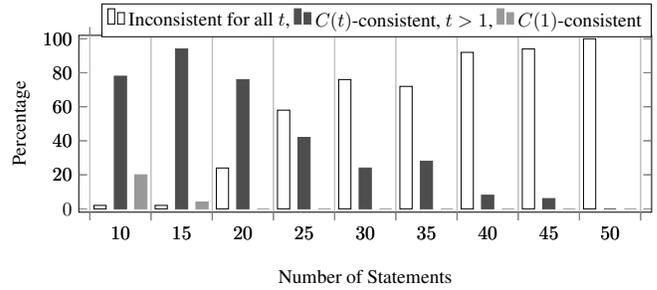

\section{Conclusion}
Exploiting the theoretical results on properties of consistent instances developed in Section~\ref{sec: rec algo} allow the algorithms PC-check to prune the search space much further than a MILP solver could do for the MILP formulation given in Section~\ref{sec: MILP}. The experimental results confirm, that the algorithms PC-check are solving the instances faster than CPLEX. Even more, the ratios between the mean solving times of the MILP and PC-check increase extremely quickly with the number of evaluations and statements. 
It is not obvious how the pruning rules of the PC-check algorithms can be incorporated in the MILP formulation as constraints.

There is no significant difference between the mean running times of the two recursive algorithms PC-check on the tested instances. 
Thus, in PC-check($\calC, \Gamma, \oplus, t, \calS=\emptyset, s$), the effort of maintaining a list $\calS$ of (possibly exponentially many) conflicting sets to prune the search space further, is not paying off. An analysis of the instance consistency types indicates that the running times for algorithms PC-check might increase with the number of inconsistent instances. 

A further analysis could involve the size of the search space, i.e., counting the number of $\Gammadestrict$-satisfying HCLP models and the number of HCLP models that were actually considered during the search. In future work, we will try using a relaxation of a MILP formulation as a fast check for inconsistency within PC-check($\calC, \Gamma, \oplus, t$). If the relaxation shows that the current subproblem is inconsistent, we can avoid another (time consuming) recursive call.

	\section*{Acknowledgment}
	This publication has emanated from research conducted
	with the financial support of Science Foundation Ireland
	(SFI) under Grant Number SFI/12/RC/2289. 

\bibliography{reference}

\end{document}